\newtheorem{theorem}{Theorem}
\newtheorem{definition}{Definition}
\newtheorem{proposition}[theorem]{Proposition}
\newtheorem{remark}{Remark}
\def\BibTeX{{\rm B\kern-.05em{\sc i\kern-.025em b}\kern-.08em
    T\kern-.1667em\lower.7ex\hbox{E}\kern-.125emX}}
\begin{document}
\title{Multi-Target Observability}
\author[1]{Debadrita Banerjee}
\author[2]{Debjani Mitra}
\author[3]{Rajesh Dey}
\author[4]{Mudassir Khan}
\author[5]{Lalan Kumar}

\affil[1]{Electrical Engineering, IIT Delhi, India}
\affil[2]{Electronics Engineering, IIT ISM Dhanbad}
\affil[3]{Information \& Technology GNUB, India}
\affil[4]{Computer Science, ACTKK University, Saudi Arabia}
\affil[5]{Electrical Engineering, IIT Delhi, India}
\date{}

\maketitle

\begin{abstract}
In this paper, we mainly focus on the problem of multi-target observability, focusing on the unique state estimation criteria for multiple targets. We derive the condition which is necessary as well as sufficient for observability using bearing angles with multiple higher-order dynamics observed by a single observer. We then establish an alternative notion of observability by analyzing ambiguous target trajectories and deriving the condition which is NECNDSUF (Nec. and Suff.) for multi-target observability, considering three types of measurements: Doppler-only, bearing-only, and combined Doppler and bearing measurements, which offers insights that can improve target distinguishability, trajectory reconstruction, and overall tracking accuracy. 

\end{abstract}

\section{Introduction}
Observability plays a crucial role in Target Motion Analysis. Depending on the relative positions of the targets and observers, there are certain relative geometry of the target and the observing platform that can lead to situations where a target becomes unobservable by the observer and couldn't produce a distinct tracking solution. In such scenarios, the observer may not be able to accurately determine the target's position or trajectory. A bearings-only tracking system does not always yield a unique tracking solution \cite{shar1999practical}.
Ensuring that the geometric arrangement avoids these problematic configurations is essential for maintaining accurate and reliable target tracking. The observability of a single target has been studied in the literature for BOTMA \cite{shar1999practical,jauffret1996observability,nardone1981observability,le1997discrete,fogel1988nth,song1996observability,li2010observability,jiang2020observability} , DOTMA and DBTMA \cite{becker1996general,jauffret1996observability,shames2013doppler,torney2007localization,xiao2010observability}, but criteria for the observability of multiple targets have not yet been derived. BOTMA requires effective maneuvers by the observer to achieve observability, making it more challenging. The accuracy of BOTMA highly depends on these maneuvers. Some studies suggest solutions for optimizing these maneuvers \cite{le1997discrete,passerieux1998optimal,le1999optimizing}, while others highlight maneuvers that keep the source's trajectory unobservable \cite{jauffret1996observability,nardone1981observability}. In \cite{jauffret1996observability}  various conditions for observability are discussed, including the necessity and effectiveness of observer maneuvers, while
\cite{nardone1981observability}details NECNDSUF conditions for such maneuvers to enhance target observability.
Additionally, optimal observer maneuvers are further explored in \cite{le1997discrete}, providing a comprehensive analysis of strategies to improve target observability. Criteria for observing a target's Nth-order dynamics via direction measurements are provided in \cite{fogel1988nth} with simple derivations that avoid using an observability matrix or nonlinear equations, showing that prior first-order conditions are necessary but insufficient. 

In system theory, observability is clearly defined within linear frameworks, but when it comes to nonlinear systems, there are multiple interpretations of observability. Commonly employed tools rooted in Lie Algebra often prove to be cumbersome for such cases \cite{hermann1977nonlinear}. Typically, when dealing with measurement nonlinearities, analysis methods like the "ratio test" or the "strongly positive semidefinite condition" are commonly employed \cite{kou1973observability,fujisawa1971some}. Though the observability issues in BOTMA are nonlinear and complex \cite{griffith1971observability,hermann1977nonlinear}, the pseudolinearized version simplifies them to a linear observability problem which is addressed in \cite{song1996observability,nardone1981observability}.
In \cite{nardone1981observability} the system is reformulated into a linear equivalent, enabling a straightforward application of observability test. Following this, analysis reveals a third-order nonlinear differential equation that encompasses the relevant limitations on the motion of own ship. After solving this equation it has been demonstrated that specific maneuvers are deemed unsuitable, even when bearing rates are non-zero. 
The study in \cite{hammel1985observability} thoroughly investigates the observability traits of the estimation algorithms in three dimensions, capable of handling various combinations of azimuth, conical, and elevation angle measurements simultaneously by converting nonlinearities into linearities.
Again Deriving both NECNDSUF conditions for solving the observability problem in general three-dimensional Target Motion Analysis (TMA) based on angle measurements appeared excessively challenging using linear equations. However,in \cite{becker1993simple} it is demonstrated that this derivation becomes notably simpler when employing a suitable observability criterion.
The work in \cite{jiang2020observability} addresses the observability problem by proposing metrics based on the condition number to quantify the degree of observability and assess tracking performance, going beyond the qualitative results of the traditional rank criterion of nonlinear observability problems. These metrics can also be used to optimize sensor trajectories and configurations, crucial for enhancing tracking performance.
In complex network systems, key goals include identifying sufficient sensor nodes for observability and selecting the best sensor combination for accurate state estimation.\cite{montanari2020observability} critically reviews observability approaches, highlighting graph-theoretical methods, their limitations, and applications in power grids and systems with multiple agents. Numerous studies in the literature\cite{lu2017observability,liu2013observability,letellier2018nonlinear,10.1007/978-981-97-0154-4_44,haber2017state} have adopted a graph-theoretical perspective to address the observability of network systems.\\
Section~\ref{sec:sec1} outlines the background of observability for multiple targets.
Section~\ref{sec:sec2} outlines the NECNDSUF conditions for observability of multiple targets when only bearing measurements are available and Section~\ref{sec:sec3} outeline the concept of observability through trajectory ambiguity when only Doppler frequency measurements are available and when both bearing and Doppler frequency measurements are available. 

\section{Theory on Observability}\label{sec:sec1}
There are various definitions of observability in the literature \cite{jauffret1996observability,nardone1981observability,becker1996general,montanari2020observability}, such as those based on the observability matrix or observability Grammians. In the upcoming section, we will focus on two specific definitions (Definition \ref{def:def1} and Definition \ref{def:def2}) to derive the observability conditions for a linear time varying multiple-target system.\\
Consider a noise-free, linear time-varying system with $n$-dimensional state. Let the set of $M$ targets be defined as $\mathcal{M}= \left\{1,2,...M\right\}$ where the states of all targets are combined into a single super state vector $\mathbf{x}(t)$ can be written as:
\begin{align}
   \dot{\mathbf{x}}(t) = \mathbf{E}(t)\mathbf{x}(t) + \mathbf{F}(t)\mathbf{u}(t), \quad \label{eqn. 1}\\ 
   \mathbf{z}(t) = \mathbf{G}(t)\mathbf{x}(t), \quad  t \in [t_{i}, t_{f}], \label{eqn. 2}
\end{align}
where, \( \mathbf{x}(t) \in \mathbb{R}^{nM \times 1} \),
\( \mathbf{E}(t) \in \mathbb{R}^{ nM \times nM} \) is the state transition matrix,
\( \mathbf{F}(t) \in \mathbb{R}^{nM \times p} \) is the input matrix,
\( \mathbf{u}(t) \in \mathbb{R}^{ p \times 1} \) is the input vector,
\( \mathbf{G}(t) \in \mathbb{R}^{ M \times nM }\) is the measurement matrix, and
\( \mathbf{z}(t) \in \mathbb{R}^{ M \times 1} \) is the measurement.
The solution to the differential equation \eqref{eqn. 1} is given by (see Appendix~\ref{observability proof}) \cite{jauffret1996observability}:
\begin{align}\label{eqn. 3}
    \mathbf{x}(t) = \tilde{\mathbf{\Phi}}(t,t_{i})\mathbf{x}(t_{i}) + \tilde{\mathbf{R}}(t),
\end{align}
where the \textbf{state transition matrix} \( \tilde{\mathbf{\Phi}}(t,t_{i}) \) is the unique matrix satisfying:
\[
\frac{d}{dt}\tilde{\mathbf{\Phi}}(t,t_{i}) = \mathbf{E}(t)\tilde{\mathbf{\Phi}}(t,t_{i}) \quad \text{with} \quad \tilde{\mathbf{\Phi}}(t_{i},t_{i}) = \mathbf{I}_{nM \times nM},
\]
and $\tilde{\mathbf{R}}(t) = \int_{t_{i}}^{t} \tilde{\mathbf{\Phi}}(t,\tau)\mathbf{F}(\tau)\mathbf{U}(\tau) d\tau,$ which is an \( nM \times 1 \) vector.
Combining \eqref{eqn. 1} and \eqref{eqn. 2}, we obtain:
\begin{align} \label{eqn.5}
    \mathbf{G}(t)\tilde{\mathbf{\Phi}}(t,t_{i})\mathbf{x}(t_{i}) = \mathbf{z}(t) - \mathbf{G}(t)\tilde{\mathbf{R}}(t),
\end{align}
which implies
\begin{align}
    \left[ \int_{t_{i}}^{t_{f}} \tilde{\mathbf{\Phi}}^\top(t, t_{i}) \mathbf{G}^\top(t) \mathbf{G}(t) \tilde{\mathbf{\Phi}}(t, t_{i}) \, dt \right] \mathbf{x}(t_{i}) = \int_{t_{i}}^{t_{f}} \tilde{\mathbf{\Phi}}^\top(t, t_{i})\nonumber \\ \mathbf{G}^\top(t)  \left[ \mathbf{z}(t) - \mathbf{G}(t) \tilde{\mathbf{R}}(t) \right] \, dt.
\end{align}
In order to solve for the unknown \( \mathbf{x}(t_{i}) \), the term $\int_{t_{i}}^{t_{f}} \tilde{\mathbf{\Phi}}^\top(t, t_{i}) \mathbf{G}^\top(t) \mathbf{G}(t) \tilde{\mathbf{\Phi}}(t, t_{i}) \, dt $ must be invertible. Invertibility ensures that the system of equations is solvable and that we can uniquely determine \( \mathbf{x}(t_{i}) \).

Thus the invertibility of $\int_{t_{i}}^{t_{f}} \tilde{\mathbf{\Phi}}^\top(t, t_{i}) \mathbf{G}^\top(t) \mathbf{G}(t) \tilde{\mathbf{\Phi}}(t, t_{i}) \, dt,$ is ensured if, for any $\mathbf{y} \neq 0$,\\ $\mathbf{y}^T \left[ \int_{t_{i}}^{t_{f}} \tilde{\mathbf{\Phi}}^\top(t, t_{i}) \mathbf{G}^\top(t) \mathbf{G}(t) \tilde{\mathbf{\Phi}}(t, t_{i}) \, dt \right] \mathbf{y} \neq 0.$
Alternatively,
\begin{align}
    \text{For any } \mathbf{y} \neq 0, \quad \int_{t_{i}}^{t_{f}} \| \mathbf{G}(t) \tilde{\mathbf{\Phi}}(t, t_{i}) \mathbf{y}\|^2 \, dt > 0.
\end{align}
In other words, to ensure invertibility, for any non-zero $ \mathbf{y}$, there must exist some  $t \in [t_{i}, t_{f}]$ such that $\| \mathbf{G}(t) \tilde{\mathbf{\Phi}}(t, t_{i}) \mathbf{y} \| \neq 0.$ (where $||.||$ usually means $2$-Norm)
\begin{definition}[Observability for Multiple Targets in Linear system] \label{def:def1}
    The system defined and described by equations (1) and (2) is observable over the interval  \([t_{i}, t_{f}]\) iff for any vector \(\mathbf{y} \in \mathbb{R}^{nM \times 1}\), there exists \(t \in [t_{i}, t_{f}]\) such that
\begin{align} \label{eqn 10}
    \|\mathbf{G}(t)\tilde{\mathbf{\Phi}}(t, t_{i})\mathbf{y}\| \neq 0.
\end{align}
\end{definition}
Another way to define observability\cite{10101861,becker1996general} in an arbitrary system is to consider a set $\mathcal{X} \subset\mathbb{R}^{nM \times 1}$ 
where $M$ represents the number of targets, and a system for multiple targets:
\begin{align}
    \dot{\mathbf{x}}(t) = I(\mathbf{x}(t), t), \quad \mathbf{z}(t) = J(\mathbf{x}(t), t)
\end{align}
with a super state vector $\mathbf{x}(t) \in \mathcal{X}$ 
representing the states of all $M$ targets, and an output vector $\mathbf{z}(t) \in \mathbb{R}^{M\times 1}$ corresponding to the measurements of these targets. 
\begin{definition}(Observability for Multiple Targets in arbitrary linear or non-linear systems from measurements) \label{def:def2}The system is observable on $\mathcal{X}$ if for all pairs of super state vectors $(\mathbf{x}_m, \mathbf{x}_n) \in \mathcal{X} \times \mathcal{X}$,
\begin{align}
  & J(\mathbf{x}(t; t_{i}, \mathbf{x}_m), t) = J(\mathbf{x}(t; t_{i}, \mathbf{x}_n), t) \quad \forall t \geq t_{i} \\ \nonumber
   & \Rightarrow \mathbf{x}(t; t_{i}, \mathbf{x}_m) = \mathbf{x}(t; t_{i}, \mathbf{x}_n)
\end{align}
where $\mathbf{x}(t; t_{i}, \mathbf{x}_m)$ and $\mathbf{x}(t; t_{i}, \mathbf{x}_n)$ represent the solutions at time $t$ of the system through initial super state vectors $\mathbf{x}_m$ and $\mathbf{x}_n$ at time $t_{i}$, respectively.
\end{definition}
In this context, observability ensures that the measurements associated with different targets are uniquely identifiable, leading to the correct estimation of the state vector $\mathbf{x}(t)$ for the multiple-target system.
\section{Observability Conditions}\label{sec:sec2}
This section outlines the observability conditions for multiple targets when bearing measurements are available. We derived the NECNDSUF condition for observability on the bearing angles of different targets from bearing measurements using Def. \eqref{def:def1}.\\
Let,
$\theta_i(t)$ be the bearing of $i^{th}$ target at time $t$ shown in (Fig.\ref{Fig:Fig 1}) (with the $i^{th}$ target moving from C at time $t$ to D at time $t+1$ and the observer moving from A at time $t$ to B at time $t+1$) given by,
\begin{align}
    {\theta}_i(t)= \tan^{-1} \left( \frac{{x_i}(t)}{{y_i}(t)} \right).
\end{align}
\begin{figure}
    \centering
    \includegraphics[width=0.8\linewidth]{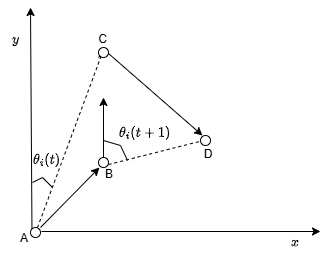}
    \caption{Bearing only tracking}
    \label{Fig:Fig 1}
\end{figure}
{$\mathbf{x}_i(t) \in \mathbb{R}^{1 \times {2}}$} for $i$-th target be defined as, 
\begin{align} \label{eqn. 14}
    \mathbf{x}_i(t) = \begin{bmatrix}
    x_i(t) & y_i(t)
    \end{bmatrix}
\end{align}
Define the super state vector, $\mathbf{x}(t) \in \mathbb{R}^{2M \times 1}$ as $\mathbf{x}(t) = \begin{bmatrix} \mathbf{x}_1(t) & \mathbf{x}_2(t) & \hdots & \mathbf{x}_M(t) \end{bmatrix}^\top $.
Now, we can write in polynomial form:
\begin{align} \label{eqn 22}
    \mathbf{x}_i^T(t) = \sum_{k=0}^{p_i} \frac{\mathbf{{x}_i^{(k)}}^T(t_{i})}{k!} (t - t_{i})^k 
\end{align}
\begin{align} \label{eqn 23}
    = \sum_{k=0}^{p_i} \mathbf{a}^i_k (t - t_{i})^k = \mathbf{A_i} \mathbf{t_i}.
\end{align}
where, $\mathbf{x}_i^{(k)}(t_{i})$ is the $k$-th time derivative of relative position vector $\textbf{x}_i(t_{i})$, $p_i=\max\{N_i,N_{OB}\}$ and $\mathbf{a}^i_k = \frac{\mathbf{{x}_i^{(k)}}^T(t_{i})}{k!}$. Here,
\(\mathbf{A_i} = [\mathbf{a}^i_{0} \ldots \mathbf{a}^i_{p_i}]\) is a \(2(p_i + 1) \times (p_i + 1)\) matrix of coefficients independent of \(t\), 
\(\mathbf{t_i} = \begin{bmatrix}
    1 & (t - t_{i}) & \ldots & (t - t_{i})^{p_i}
\end{bmatrix}^\top\)
and 
    $\mathbf{x}(t) = \mathbf{A}\mathbf{t},$  
where, $\mathbf{A} \in \mathbb{R}^{2s \times s} $ and $ \mathbf{t} \in \mathbb{R}^{s \times 1}$ be defined as,
$\mathbf{A} = \mathrm{diag}({\mathbf{A_1}, \mathbf{A_2}, \ldots, \mathbf{A_M}}) \quad \text{and} \quad \mathbf{t} = \begin{bmatrix}
    \mathbf{t_1} & \mathbf{t_2} & \hdots & \mathbf{t_M}
\end{bmatrix}^\top.$
Let us define the Pseudo Linearized measurement for the $i^{th}$ target as,
\[
\mathbf{z}_i(t) = \begin{bmatrix}
c\o s(\theta_i(t)) & -sin(\theta_i(t)) & \mathbf{0}_{1 \times 2p_i}
\end{bmatrix}
\mathbf{x}_i^\top(t)
\]
\[
\quad = x_i c\o s(\theta_i(t)) - y_i sin(\theta_i(t)).
\]
The Pseudo Measurement vector for $M$ targets is $\mathbf{z}(t) = \mathbf{C}(t) \mathbf{x}(t)$,
where $\mathbf{C}(t) \in \mathbb{R}^{M \times 2s}$ be defined as,
\begin{align}\label{eqn.17}
    \mathbf{C}(t) = \begin{bmatrix}
    \mathbf{c_1(t)} & \mathbf{0}_{1\times 2(p_2+1)} & \cdots & \mathbf{0}_{1\times 2(p_M+1)} \\
    \mathbf{0}_{1\times 2(p_1+1)} & \mathbf{c_2(t)} & \cdots & \mathbf{0}_{1\times 2(p_M+1)} \\
    \vdots & \vdots & \ddots & \vdots \\
    \mathbf{0}_{1\times 2(p_1+1)} & \mathbf{0}_{1\times2(p_2+1)} & \cdots & \mathbf{c_M(t)}
\end{bmatrix},
\end{align}
and,
\begin{align}\label{eqn: 18}
   \mathbf{c_i}(t) = \begin{bmatrix}
    c\o s(\theta_i(t)) & -sin(\theta_i(t)) & \mathbf{0}_{1 \times 2p_i}
    \end{bmatrix}. 
\end{align}
Let, $\tilde{\mathbf{\Phi}}_i(t, t_{i}) \in \mathbb{R}^{2(p_i + 1) \times 2(p_i + 1)}$ be defined below;
\begin{align} \label{eqn: 19}
   \tilde{\mathbf{\Phi}}_i(t, t_{i}) = \begin{bmatrix}
    1 & 0 & (t - t_{i}) & 0 &\cdots &(t - t_{i})^{p_i}  \\
    0 & 1 & 0 & (t - t_{i}) &\cdots &(t - t_{i})^{p_i}\\
    \vdots & \vdots & \vdots & \vdots & \ddots & \vdots\\
    0 & 0 & 0 & 0 &\cdots &(t - t_{i})^{p_i}
    \end{bmatrix}.
\end{align}
Now for any unforced system (i.e., $\tilde{R} = 0$) \eqref{eqn. 3},
\begin{align}\label{eqn. 18}
    \mathbf{C}(t) \tilde{\mathbf{\Phi}}(t, t_{i}) \mathbf{x}(t_{i}) = \mathbf{z}(t), \quad\text{i.e.,}\quad\mathbf{x}(t) = \tilde{\mathbf{\Phi}}(t, t_{i}) \mathbf{x}(t_{i}).
\end{align}
Here, $\tilde{\mathbf{\Phi}}(t, t_{i})\in \mathbb{R}^{2s \times 2s}$ is a block diagonal matrix with each diagonal block being $\tilde{\mathbf{\Phi}}_i(t, t_{i})$.
Now from \eqref{eqn 22},\eqref{eqn 23},\eqref{eqn: 19} and \eqref{eqn. 18} we can write for $i$-th target,
\[
\mathbf{x}_i^\top(t) = \mathbf{x}_i^\top(t_{i}) + \mathbf{x}_i^{(1)\top}(t_{i}) (t - t_{i})+ \hdots +\frac{\mathbf{x}_i^{(p_i)\top}(t_{i})}{p_i!} (t - t_{i})^{p_i} 
\]
\[
= \mathbf{A_i} \mathbf{t_i} = \tilde{\mathbf{\Phi}}_i(t,t_{i}) \mathbf{x}_i^\top(t_{i}) 
\quad \quad \forall i \in \mathcal{M}.
\]

Consider contrapose of the observability condition defined in Definition (\ref{def:def1}) i.e., for the system to be observable on \([t_{i}, t_{f}]\) for any $\mathbf{\tilde{x}} \in\mathbb{R}^{2s \times 1}$,
\begin{align}\label{eqn:: 21}
    \{ \forall t \in [t_{i}, t_{f}],|| \mathbf{C}(t) \tilde{\mathbf{\Phi}}(t, t_{i}) \mathbf{\tilde{x}||} = 0 \} \implies \{ \mathbf{\tilde{x}} = 0 \}.
\end{align}
Let, $\mathbf{\tilde{x}}=\mathbf{x}(t_{i})$ and the initial state for the $i$-th target at time $t_{i}$ be defined as \eqref{eqn. 14}, $\mathbf{x_i}(t_{i}) = \begin{bmatrix} x_i(t_{i}) & y_i(t_{i}) & \hdots & x^{(p_i)}_{i}(t_{i}) & y^{(p_i)}_{i}(t_{i}) \end{bmatrix} $. Now, for $||\mathbf{C}(t) \tilde{\mathbf{\Phi}}(t, t_{i}) \mathbf{\tilde{x}}||$ to be equal to zero, $\mathbf{c}_i(t) \tilde{\mathbf{\Phi}}_i(t, t_{i}) \mathbf{x}_i^T(t_{i})$ should be equal to zero $\forall i\in \mathcal{M}$ and from \eqref{eqn. 14}, \eqref{eqn: 18} and \eqref{eqn: 19}, $\mathbf{c}_i(t) \tilde{\mathbf{\Phi}}_i(t, t_{i}) \mathbf{x}_i^T(t_{i}) = 0$ implies,
\begin{align}\label{eqn: eqn:27}
  c\o s&{\theta_i}(t)x_i(t_{i}) +c\o s{\theta_i}(t)x_i^{(1)}(t_{i}) (t - t_{i}) + \hdots + \nonumber \\ &c\o s{\theta_i}(t)\frac{x_i^{(p_i)}(t_{i})}{p_i!} (t - t_{i})^{p_i} -
  sin{\theta_i}(t)y_i(t_{i}) - \nonumber \\ & sin{\theta_i}(t)y_i^{(1)}(t_{i}) (t - t_{i}) - \hdots -sin{\theta_i}(t)\frac{y_i^{(p_i)}(t_{i})}{p_i!} (t - t_{i})^{p_i} \nonumber \\ & = 0 \quad \quad \quad \quad \quad \quad \quad \quad \quad \quad\quad \quad\quad \quad \forall i \in \mathcal{M}, 
\end{align}

where $\mathbf{x}_i^{(k)}(t_{i})$ and $\mathbf{y}_i^{(k)}(t_{i})$ ($\forall i \in \mathcal{M}$ and $\forall k \in \{1,2,\hdots p_i\}$) are the unknown variables.
To ensure the observability condition \eqref{eqn:: 21} the matrix \(\textbf{P}(t) \) given below needs to be full rank (\textit{i.e.}, it should have rank 2).
\[
\mathbf{P}(t) = \begin{pmatrix}
c\o s(\theta_1(t)) & -sin(\theta_1(t)) \\
c\o s(\theta_2(t)) & -sin(\theta_2(t)) \\
c\o s(\theta_3(t)) & -sin(\theta_3(t)) \\
\vdots & \vdots \\
c\o s(\theta_M(t)) & -sin(\theta_M(t))
\end{pmatrix}.
\]
Let us consider the sub matrix \( \textbf{M} \) with two rows:
\[
\mathbf{M}(t) = \begin{pmatrix}
c\o s(\theta_i(t)) & -sin(\theta_i(t)) \\
c\o s(\theta_j(t)) & -sin(\theta_j(t))
\end{pmatrix},
\]
For \( \textbf{M}(t) \) to be full rank, its determinant must be non-zero. 
This shows that:
\begin{align}
    \theta_j(t) - \theta_i(t) \neq k\pi \quad \text{for} \quad k \in \mathbb{Z}.
\end{align}
In other words, \(\theta_j(t)\) should not be equal to \(\theta_i(t) + k\pi\) for any integer \( k \). 

This condition has to hold $\forall i \in \mathcal{M}, j \in \mathcal{M}$, $i \neq j$ and   
$\forall k \in \mathbb{Z}.$
Thus, the general condition for the matrix $\textbf{P}(t)$ to be full rank is that all angles $\theta_i(t), i \in \mathcal{M},$ should be distinct modulo $\pi$, where odd $k$ implies observer in between targets and even $k$ implies targets on the same side of the observer.

\begin{proposition}
    The system of $M$ targets and an observer is observable 
    if none of the targets in their trajectory is positioned along the line joining the observer and another target for all time instants $t$ in the interval \([t_{i}, t_{f}]\). In other words, the bearing angles of each target should be distinct modulo $\pi$ for all $t \in [t_{i},t_{f}]$.
\end{proposition}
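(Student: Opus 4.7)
The plan is to invoke the contrapositive of Definition~\ref{def:def1} as stated in~\eqref{eqn:: 21}: the system is observable on $[t_i,t_f]$ iff $\|\mathbf{C}(t)\tilde{\mathbf{\Phi}}(t,t_i)\tilde{\mathbf{x}}\| = 0$ for all $t\in[t_i,t_f]$ forces $\tilde{\mathbf{x}}=\mathbf{0}$. First I would note that by construction (see~\eqref{eqn.17} and the definition of $\tilde{\mathbf{\Phi}}$ following~\eqref{eqn. 18}) both $\mathbf{C}(t)$ and $\tilde{\mathbf{\Phi}}(t,t_i)$ are block-diagonal in the target index, so the norm condition decouples into $M$ per-target scalar conditions $\mathbf{c}_i(t)\tilde{\mathbf{\Phi}}_i(t,t_i)\mathbf{x}_i^\top(t_i)=0$ for every $i\in\mathcal{M}$ and every $t$. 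This reduces the multi-target observability question to an analysis of equation~\eqref{eqn: eqn:27}, which must hold identically on the time interval.

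Next, I would freeze a time instant $t$ and stack the $M$ scalar equations~\eqref{eqn: eqn:27} as a single linear system acting on the ``current-time'' initial-position vector $(x_i(t_i),y_i(t_i))_{i\in\mathcal{M}}$ through the matrix $\mathbf{P}(t)$ already written in the text. For this system to force the trivial solution, $\mathbf{P}(t)$ must attain full column rank (rank $2$). Picking any two rows $i,j$ to form the $2\times 2$ sub-matrix $\mathbf{M}(t)$, I would directly compute
\[
\det\mathbf{M}(t)= -\cos\theta_i(t)\sin\theta_j(t)+\sin\theta_i(t)\cos\theta_j(t)=\sin\!\bigl(\theta_i(t)-\theta_j(t)\bigr),
\]
so requiring $\det\mathbf{M}(t)\neq 0$ for every pair $i\neq j$ and every $t\in[t_i,t_f]$ is equivalent to $\theta_i(t)-\theta_j(t)\neq k\pi$ for all $k\in\mathbb{Z}$. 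Geometrically, even $k$ would place the two targets on the same ray from the observer, and odd $k$ would place the observer between them; excluding both is precisely the ``no collinearity of two targets with the observer'' hypothesis of the proposition.

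The main obstacle I foresee is that~\eqref{eqn: eqn:27} really involves $2(p_i{+}1)$ unknowns per target (the initial position together with all its derivatives up to order $p_i$), while the rank analysis of $\mathbf{P}(t)$ only addresses the zeroth-order component. To close this gap I would differentiate~\eqref{eqn: eqn:27} successively in $t$ and evaluate at $t=t_i$: each differentiation peels off the next Taylor coefficient and produces an analogous linear constraint of the form $\mathbf{P}(t_i)\,(x_i^{(k)}(t_i),y_i^{(k)}(t_i))^\top=0$ (plus lower-order terms that have already been shown to vanish). Induction on $k$ then propagates the same distinct-bearings rank condition to every derivative order, so distinctness of $\{\theta_i(t)\}$ modulo $\pi$ on $[t_i,t_f]$ suffices to conclude $\tilde{\mathbf{x}}=\mathbf{0}$ and hence observability in the sense of Definition~\ref{def:def1}.
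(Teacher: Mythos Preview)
Your proposal tracks the paper's derivation step for step: the contrapositive of Definition~\ref{def:def1}, the block-diagonal decoupling into the per-target identities~\eqref{eqn: eqn:27}, the rank analysis of $\mathbf{P}(t)$ via the $2\times 2$ minor $\mathbf{M}(t)$, and the conclusion $\theta_i(t)-\theta_j(t)\neq k\pi$ are exactly what appears in the text preceding the proposition. You even go further than the paper by flagging the higher-order Taylor coefficients and sketching an induction to handle them; the paper does not address that point.

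However, your own last paragraph exposes a structural gap that is already present in the paper's argument and which the induction does not repair. Because $\mathbf{C}(t)$ and $\tilde{\mathbf{\Phi}}(t,t_i)$ are block-diagonal, equation~\eqref{eqn: eqn:27} for target $i$ involves \emph{only} target $i$'s unknowns; row $i$ of $\mathbf{P}(t)$ acts on $(x_i,y_i)$ while row $j$ acts on the \emph{different} pair $(x_j,y_j)$. These $M$ equations therefore cannot be ``stacked'' into a single system $\mathbf{P}(t)(X,Y)^\top=0$ on a common $2$-vector, and full column rank of $\mathbf{P}(t)$ does not force any individual $(x_i,y_i)$ to vanish---one row by itself never does. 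Your inductive step $\mathbf{P}(t_i)(x_i^{(k)}(t_i),y_i^{(k)}(t_i))^\top=0$ inherits the same defect: only its $i$-th component actually follows from differentiating the $i$-th target's identity~\eqref{eqn: eqn:27}, so rank~$2$ of $\mathbf{P}(t_i)$ buys nothing. In the sense of Definition~\ref{def:def1}, observability of target $i$ is governed by the \emph{time variation} of $\theta_i(\cdot)$ alone, which is independent of how $\theta_i(t)$ compares to $\theta_j(t)$; the distinct-bearings hypothesis does not enter that analysis. So while your write-up is a faithful (indeed, more careful) rendering of the paper's reasoning, the step that would turn rank of $\mathbf{P}(t)$ into $\tilde{\mathbf{x}}=\mathbf{0}$ fails.
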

\begin{remark}
    Out of $M$ targets, if any two targets are collinear with the observer for any $ t \in  [t_{i},t_{f}]$, the system remain unobservable.
    Thus, the system will remain unobservable if any two or more targets are collinear with the observer \textit{i.e.}, when two or more targets lie on the line in sky blue (see Fig.~\ref{fig:fig 2}).
\end{remark}
\begin{remark}
    In 3D, where bearing and elevation angles are given, either the bearing angles or the elevation angles should be distinct modulo $\pi$ for all $t \in [t_{i},t_{f}]$ for observability.
\end{remark}
\begin{remark}
    The condition for the observability of multiple targets, derived using Def. \eqref{def:def1} and based on bearing measurements, is independent of the order of dynamics of both the observer and the targets.
\end{remark}

\begin{figure}[h!]
    \centering
    \includegraphics[width=0.8\linewidth]{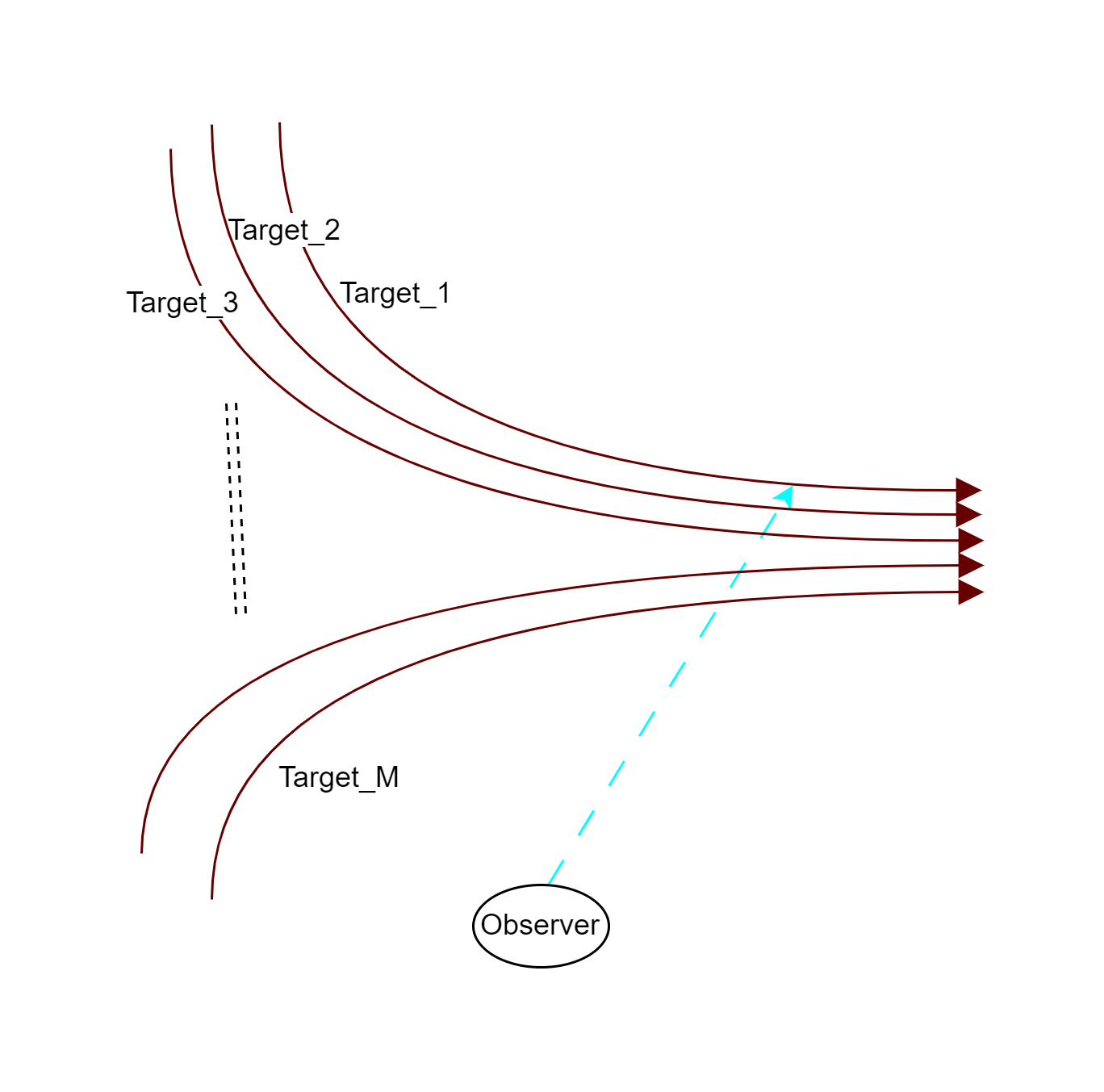}
    \caption{Multi-Target Scenario}
    \label{fig:fig 2}
\end{figure}
\section{Observability through Trajectory Ambiguity}\label{sec:sec3}
Another notion of observability in the multi-target scenario is to ensure that the trajectory of one target is not a compatible trajectory (\textit{i.e.}, the trajectory with identical measurement) of any other target. In such a case, shrinking the compatible trajectory of one target (which is the true trajectory of another target) to its true trajectory to ensure individual observability may become impossible and can hinder observability of the system of $M$ targets, as it would prevent unique state estimation for each individual target.
\begin{definition}(Trajectory Ambiguity)
    Let \( \mathcal{M} = \{1, 2, \dots, M\} \) represent the set of $M$ targets. Two or more targets 
    are said to have ambiguous trajectories if they have the same measurement histories $\forall t \in [t_{i},t_{f}]$, which can be represented as:
\begin{align}
    z_i(t) = z_j(t) = \hdots = z_k(t)\quad \forall i, j,\hdots k \in \mathcal{M}, \nonumber \\ \ i \neq j \neq \hdots \neq k \quad \text{and} \quad \forall t \in [t_{i},t_{f}], 
\end{align}
where, \( z_i(t) \), \( z_j(t) \) and \( z_k(t) \) represent the measurements of the \( i \)-th, \( j \)-th and \( k \)-th targets.
\end{definition}
For simplicity, we have derived the NECNDSUF conditions for ambiguous trajectories with respect to Doppler frequencies, bearing angles, and the combination of Doppler frequencies and bearing angles, considering two targets out of the $M$ targets.
\subsection{Necessary condition for ambiguity in trajectories with Doppler frequencies}
\begin{proposition}
    Multiple targets will have ambiguous trajectories with respect to Doppler frequencies iff they satisfy the following condition:
    \begin{align}\label{eqn 39}
        \tilde{\mathbf{s}}_i(t) - \tilde{\mathbf{s}}_j(t) = (\mathbf{W}(t) - \mathbf{I}) \mathbf{s}_j(t), \quad \forall t \in [t_{i}, t_{f}],
    \end{align}
    for any \(i \in M\), \(j \in M\), and \(i \neq j\). Here, \(\tilde{\mathbf{s}}_i(t)\) represents the position vector of the \(i\)-th target at time \(t\), and \(\tilde{\mathbf{s}}_j(t)\) represents the position vector of the \(j\)-th target at time \(t\). The relative position vector of the \(i\)-th target at time \(t\) is given by: $\mathbf{s}_i(t) = \tilde{\mathbf{s}}_i(t) - \mathbf{s}_\text{OB}(t),$
    and the relative position vector of the \(j\)-th target at time \(t\) is given by: $\mathbf{s}_j(t) = \tilde{\mathbf{s}}_j(t) - \mathbf{s}_\text{OB}(t).$
    The vector \(\mathbf{s}_\text{OB}(t)\) denotes the position of the observer (e.g., radar, sensor) at time \(t\). $\mathbf{W}(t) = \mathbf{U(t)} \left[ l' +  \frac{b' + c(1 - l')(t - t_{i})}{s_j(t)} \right]$ (is derived and explained in the proof below) is a transformation matrix that varies with time \(t\) and represents some dynamic transformation, while \(\mathbf{I}\) is the identity matrix.\\
\end{proposition}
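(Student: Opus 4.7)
My plan is to reduce the vector identity to a scalar statement about range rates, since Doppler measurements are purely radial, and then to reinsert the angular information through a unit-vector alignment map. I would begin by writing the Doppler signature emitted by target $k$ and received at the observer in the standard form $z_k(t) = f_{0k}\bigl(1 - \dot{s}_k(t)/c\bigr)$, where $s_k(t) = \|\mathbf{s}_k(t)\|$ is the range and $\dot{s}_k(t) = \mathbf{s}_k^\top(t)\dot{\mathbf{s}}_k(t)/s_k(t)$ is the range rate. Imposing the ambiguity condition $z_i(t) = z_j(t)$ for all $t \in [t_i,t_f]$ and setting $l' := f_{0j}/f_{0i}$ yields the scalar ODE $\dot{s}_i(t) = l'\dot{s}_j(t) + c(1-l')$, which integrates immediately to
\begin{align*}
s_i(t) = l'\,s_j(t) + b' + c(1-l')(t-t_i),
\end{align*}
with the constant of integration $b' = s_i(t_i) - l'\,s_j(t_i)$. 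This is the scalar ``necessary'' content of the ambiguity; the rest of the argument is just bookkeeping to lift it to a vector identity.

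Next, I would introduce $\mathbf{U}(t)$ as the (time-varying) orthogonal map that aligns the line-of-sight unit vectors, namely $\mathbf{s}_i(t)/s_i(t) = \mathbf{U}(t)\,\mathbf{s}_j(t)/s_j(t)$; such a $\mathbf{U}(t)$ always exists pointwise and is unconstrained by the Doppler data, which is exactly the source of the trajectory ambiguity. Multiplying both sides by $s_i(t)$ and substituting the scalar relation for $s_i(t)/s_j(t)$ gives
\begin{align*}
\mathbf{s}_i(t) = \mathbf{U}(t)\!\left[l' + \frac{b' + c(1-l')(t-t_i)}{s_j(t)}\right]\!\mathbf{s}_j(t) = \mathbf{W}(t)\,\mathbf{s}_j(t),
\end{align*}
which identifies the transformation $\mathbf{W}(t)$ stated in the proposition. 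Subtracting $\mathbf{s}_j(t)$ from both sides and using $\mathbf{s}_k(t) = \tilde{\mathbf{s}}_k(t) - \mathbf{s}_\text{OB}(t)$ cancels the observer position and produces the claimed identity $\tilde{\mathbf{s}}_i(t) - \tilde{\mathbf{s}}_j(t) = (\mathbf{W}(t) - \mathbf{I})\,\mathbf{s}_j(t)$, completing the necessity direction.

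For sufficiency I would simply run the argument in reverse: starting from the vector identity and adding $\mathbf{s}_j(t)$ to both sides recovers $\mathbf{s}_i(t) = \mathbf{W}(t)\mathbf{s}_j(t)$, taking norms (using orthogonality of $\mathbf{U}(t)$) returns the affine range relation $s_i(t) = l's_j(t) + b' + c(1-l')(t-t_i)$, differentiating gives $\dot{s}_i(t) = l'\dot{s}_j(t) + c(1-l')$, and rearranging reproduces $f_{0i}(1-\dot{s}_i/c) = f_{0j}(1-\dot{s}_j/c)$, i.e.\ $z_i(t) = z_j(t)$ for all $t$.

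The main obstacle I expect is the angular/rotational step: Doppler data is a scalar per target, so the matrix $\mathbf{U}(t)$ has to be introduced by hand, and one must be careful to (i) justify its well-definedness (pick any orthogonal $\mathbf{U}(t)$ aligning the two line-of-sight unit vectors, with appropriate regularity to keep $\mathbf{W}(t)$ differentiable), (ii) verify it drops out cleanly when norms are taken in the sufficiency direction, and (iii) handle the degenerate instants when $s_j(t) = 0$ separately so that the scalar factor in $\mathbf{W}(t)$ remains well-defined. The algebraic manipulations with $l'$, $b'$, and the observer cancellation are routine; the conceptual care lies in treating $\mathbf{U}(t)$ as a genuine degree of freedom rather than a derived quantity, since this is precisely what encodes the non-uniqueness of Doppler-only trajectory reconstruction.
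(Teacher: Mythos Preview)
Your necessity argument is essentially identical to the paper's: equate Doppler returns, integrate the resulting range-rate relation to obtain $s_i(t)=l's_j(t)+b'+c(1-l')(t-t_i)$ with $l'=f_{j,0}/f_{i,0}$ and $b'=s_i(t_i)-l's_j(t_i)$, introduce an orthogonal $\mathbf{U}(t)$ aligning the two line-of-sight unit vectors, factor to define $\mathbf{W}(t)$, and cancel the observer position to reach the stated identity. The structure, the constants, and the bookkeeping all match.

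Where you diverge is on sufficiency. You propose to run the chain backwards (add $\mathbf{s}_j$, take norms using orthogonality of $\mathbf{U}(t)$, differentiate, recover the Doppler equality). The paper, however, does \emph{not} prove sufficiency inside the proposition at all; despite the ``iff'' in the statement, its proof establishes only the forward direction, and a subsequent Remark explicitly declares the condition to be necessary but not sufficient, then supplies the much stronger sufficient conditions $f_{i,0}=f_{j,0}$, $\mathbf{W}(t)=\mathbf{I}$, $\mathbf{s}_i(t)=\mathbf{s}_j(t)$ via a separate computation. So your reverse argument goes beyond what the paper actually proves for this proposition. Be aware, too, of the sign issue your reversal glosses over: taking norms of $\mathbf{s}_i=\mathbf{W}(t)\mathbf{s}_j$ yields $s_i(t)=\bigl|l's_j(t)+b'+c(1-l')(t-t_i)\bigr|$, and you need that scalar to stay positive on $[t_i,t_f]$ before differentiating; you should state this as a standing hypothesis alongside the $s_j(t)\neq 0$ caveat you already flagged.
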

\begin{proof}
    Let the Doppler frequency measurement history for $i^{th}$ and $j^{th}$ targets  \(f_{D,i}(t)\) and \(f_{D,j}(t)\), $\forall t \in [t_{i},t_{f}]$, be the same,
\begin{align} \label{eqn 46}
    f_{i,0}\left(1 - \frac{\dot{s}_i(t)}{c}\right) = f_{j,0} 
\left(1 - \frac{\dot{s}_j(t)}{c}\right) \quad \forall t \in [t_{i},t_{f}]
\end{align} 
for any \( i \in \mathcal{M} \), \( j \in \mathcal{M} \) and $i\neq j$ where $f_{i,0}$ and $f_{j,0}$ are the actual radiated tonals from $i^{th}$ and $j^{th}$ targets, $\dot{s}_i(t)= \frac{\mathbf{\dot{s_i}}^T(t)\mathbf{s_i}(t)}{{s_i}(t)} $ and $\dot{s}_j(t)= \frac{\mathbf{\dot{s_j}}^T(t)\mathbf{s_j}(t)}{{s_j}(t)} $ are the relative velocities of $i^{th}$ and $j^{th}$ targets, $\mathbf{s_i}(t)$ and $\mathbf{s_j}(t)$ are the relative positional vectors of  $i^{th}$ and $j^{th}$ targets.\\ 
Now, integrating \eqref{eqn 46} from $t_{i}$ to $t$ and rearranging we obtain,
\begin{align}\label{eqn 47}
    s_i(t) = l's_j(t) + b' + c(1-l')(t-t_{i})
\end{align}
where,
$l' = \frac{f_{j,0}}{f_{i,0}}$$\quad \& $$\quad b' = s_{i,0} - l's_{j,0}$ \\
$s_{i,0}$ and $s_{j,0}$ are two initial positions of $i^{th}$ and $j^{th}$ targets.
Thus, if $\hat{\mathbf{u}}_{s_i(t)}$ and $\hat{\mathbf{u}}_{s_j(t)}$ are the unit vectors along the directions of $\textbf{s}_i(t)$ and $\textbf{s}_j(t)$ respectively, we can write, based on \eqref{eqn 47}, and considering the transformation of unit vectors $\hat{\mathbf{u}}_{s_i(t)}$ and $\hat{\mathbf{u}}_{s_j(t)}$ by an arbitrary unitary transformation $\mathbf{U(t)}$ for all $t \in [t_{i}, t_{f}]$:
\begin{align*}
\mathbf{s}_i(t) &= s_i(t)\hat{\mathbf{u}}_{s_i(t)} \\
                &= [l' s_j(t) + b' + c(1 - l')(t - t_{i})]\hat{\mathbf{u}}_{s_i(t)} \quad \\
                &= \mathbf{U(t)}[l' s_j(t) + b' + c(1 - l')(t - t_{i})]\hat{\mathbf{u}}_{s_j(t)} \\
                &= \mathbf{U(t)}[l' +\frac{b' + c(1 - l')(t - t_{i})}{s_j(t)}]\mathbf{s}_j(t)
\end{align*}
\begin{align*}
    \Rightarrow  
\mathbf{\tilde{s}}_i(t) -\mathbf{s}_{OB}(t) = \mathbf{U(t)} 
\left[ 
l' +  \frac{b' + c(1 - l')(t - t_{i})}
{s_j(t)} \right]
(\mathbf{\tilde{s}}_j(t) -\mathbf{s}_{OB}(t))
\end{align*}
\begin{align} \label{eqn. 28:}
    \quad \Rightarrow \mathbf{\tilde{s}}_i(t) = \mathbf{W}(t) \mathbf{\tilde{s}}_j(t) +\mathbf{s}_{OB}(t) (\mathbf{I}-\mathbf{W}(t))
\end{align}
where $\mathbf{W}(t) = \mathbf{U(t)} 
\left[ 
l' +  \frac{b' + c(1 - l')(t - t_{i})}
{s_j(t)} \right] $  and the function $\mathbf{U(t)}$ is designed to offer a unitary transformation, which ensures that one unit vector is mapped or transformed into another unit vector. This transformation preserves the length and orthogonality of the unit vectors, maintaining their unitary properties. Now \eqref{eqn. 28:} can be written as:
\[
\mathbf{\tilde{s}}_i(t) - \mathbf{\tilde{s}}_j(t) = (\mathbf{W}(t)-\mathbf{I}) \mathbf{s}_j(t)
\]

\end{proof} 

\begin{remark}
   The necessary condition for Ambiguity in Trajectories with respect to Doppler frequencies derived previously describes the relationship between the position vectors of the targets $i$-th and $j$-th. It indicates that if the relative positions of the two targets are related by a transformation matrix $\mathbf{W}(t)$, then their trajectories will be ambiguous; however, it is not sufficient. For sufficiency, this condition alone does not guarantee that the Doppler frequency measurements will be identical,it only provides a constraint on how their relative positions are related. Along with the previously established necessary condition, new sufficient conditions are:
    \begin{enumerate}
        \item \( f_{i,0} = f_{j,0} \) \quad \text{(same radiated tonal)},
        \item \( \mathbf{W}(t) = I \) \quad \text{(no relative transformation of velocities)}
        \item $\mathbf{s}_i(t) = {\mathbf{s}}_j(t).$
    \end{enumerate}
    \begin{proof}
        See Appendix~\ref{necc_suff_dopp}.
    \end{proof}

\end{remark}

\subsection{NECNDSUF condition for ambiguity in trajectories with bearing angles}
Let the bearing measurement histories for $i^{th}$ and $j^{th}$ targets be the same, \textit{i.e.}, $ \mathbf{s}_i(t) = \alpha'(t) \mathbf{s}_j(t) \quad \forall t \in [t_{i},t_{f}] $
for any \( i \in \mathcal{M} \), \( j \in \mathcal{M} \), $i\neq j$ and $\alpha'(t)$ being any scalar constant at time t.
Then, the NECNDSUF condition for ambiguity in trajectories between the $i$-th and the $j$-th targets is:
\begin{align}
    \quad  \mathbf{\tilde{s}}_i(t) - \mathbf{\tilde{s}}_j(t) = (\alpha'(t)-1) \mathbf{s}_j(t) \quad \forall t \in [t_{i},t_{f}].
\end{align}

\subsection{NECNDSUF condition for ambiguity in trajectories with Doppler and bearing measurements} 

\begin{proposition}
The NECNDSUF condition for an ambiguous trajectory, derived from measurements with Doppler frequency and bearing, implies that \(\mathbf{s}_j(t)\) must be an eigenvector of \(\mathbf{W}(t)\) corresponding to the eigenvalue \(\alpha(t)\). Where, \(\mathbf{\tilde{s}}_j(t)\) is the position vector of the \(j\)-th target at time \(t\).  
\(\mathbf{W}(t)\) is a transformation matrix that varies with time \(t\) and represents some dynamic transformation.  
\(\alpha(t)\) is a unit scalar eigenvalue associated with the eigenvector \(\mathbf{s}_j(t)\) at time \(t\).  
\end{proposition}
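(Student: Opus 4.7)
The plan is to derive the eigenvalue relation by combining the two ambiguity conditions already established in the preceding subsections. For the system to produce identical Doppler \emph{and} identical bearing measurements simultaneously, both prior NECNDSUF conditions must hold on the same interval $[t_i,t_f]$, so I will simply intersect them algebraically.

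First, I would recall that the Doppler ambiguity derivation yields $\mathbf{s}_i(t)=\mathbf{W}(t)\mathbf{s}_j(t)$ (this is the intermediate step obtained in the proof of the previous proposition, just before equation~\eqref{eqn. 28:}, once the observer offset is absorbed into the relative vectors). Next, I would rewrite the bearing-only ambiguity condition $\tilde{\mathbf{s}}_i(t)-\tilde{\mathbf{s}}_j(t)=(\alpha'(t)-1)\mathbf{s}_j(t)$ in the equivalent relative form $\mathbf{s}_i(t)=\alpha'(t)\mathbf{s}_j(t)$, which expresses the parallelism of the two bearing lines through the observer.

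Then I would equate the two expressions for $\mathbf{s}_i(t)$ to obtain, for every $t\in[t_i,t_f]$,
\begin{equation*}
\mathbf{W}(t)\mathbf{s}_j(t)=\alpha'(t)\,\mathbf{s}_j(t),
\end{equation*}
which is exactly the statement that $\mathbf{s}_j(t)$ is an eigenvector of $\mathbf{W}(t)$ with eigenvalue $\alpha(t):=\alpha'(t)$. For sufficiency, I would run the argument in reverse: if $\mathbf{s}_j(t)$ is an eigenvector of $\mathbf{W}(t)$ at every $t$ with scalar eigenvalue $\alpha(t)$, then defining $\mathbf{s}_i(t):=\mathbf{W}(t)\mathbf{s}_j(t)=\alpha(t)\mathbf{s}_j(t)$ recovers both conditions at once, so the constructed $i$-th trajectory is simultaneously Doppler-compatible and bearing-compatible with the $j$-th.

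The main obstacle will be justifying the qualifier that $\alpha(t)$ is a \emph{unit} scalar. This is not a consequence of the algebra above by itself; it must come from the structural form $\mathbf{W}(t)=\mathbf{U}(t)\bigl[l'+\tfrac{b'+c(1-l')(t-t_i)}{s_j(t)}\bigr]$, where $\mathbf{U}(t)$ is unitary. I would argue that on an eigendirection of $\mathbf{W}(t)$, the unitary $\mathbf{U}(t)$ must fix that direction (otherwise $\mathbf{W}(t)\mathbf{s}_j(t)$ could not remain collinear with $\mathbf{s}_j(t)$ as required by the bearing condition), so $\mathbf{U}(t)\hat{\mathbf{u}}_{s_j(t)}=\pm\hat{\mathbf{u}}_{s_j(t)}$, and therefore $\alpha(t)=\pm\bigl[l'+\tfrac{b'+c(1-l')(t-t_i)}{s_j(t)}\bigr]$, a single scalar of the claimed form. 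I would close by noting that in the degenerate case $\alpha(t)\equiv 1$ the two trajectories coincide, recovering the trivial sufficient condition from the earlier remark.
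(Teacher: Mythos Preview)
Your derivation of the eigenvector relation is essentially identical to the paper's: both arguments intersect the Doppler condition $\tilde{\mathbf{s}}_i-\tilde{\mathbf{s}}_j=(\mathbf{W}(t)-\mathbf{I})\mathbf{s}_j$ with the bearing condition $\tilde{\mathbf{s}}_i-\tilde{\mathbf{s}}_j=(\alpha'(t)-1)\mathbf{s}_j$ and cancel to obtain $\mathbf{W}(t)\mathbf{s}_j(t)=\alpha'(t)\mathbf{s}_j(t)$. You work in relative coordinates $\mathbf{s}_i,\mathbf{s}_j$ from the start whereas the paper equates the right-hand sides of the two displayed conditions directly, but this is cosmetic.

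Where you diverge from the paper is in justifying that $\alpha(t)$ is \emph{unit}. Your structural argument via the unitary factor $\mathbf{U}(t)$ only yields $\alpha(t)=\pm\bigl[l'+\tfrac{b'+c(1-l')(t-t_i)}{s_j(t)}\bigr]$, which is a single real scalar but not obviously equal to $1$; you then treat $\alpha(t)\equiv 1$ merely as a degenerate special case. The paper instead pulls in the Doppler \emph{sufficient} conditions established in the preceding appendix (in particular $\mathbf{W}(t)=\mathbf{I}$), and combining $\mathbf{W}(t)=\mathbf{I}$ with the eigenvalue relation $\mathbf{W}(t)\mathbf{s}_j(t)=\alpha'(t)\mathbf{s}_j(t)$ immediately forces $\alpha'(t)=1$. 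So the paper's ``unit'' is literally $\alpha=1$, obtained by invoking the earlier remark rather than by analysing the unitary structure of $\mathbf{W}(t)$; your route does not close that last step.
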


\begin{proof}
    See Appendix~\ref{necc_suff_bearing_doppler}.
\end{proof}

\section{Conclusion}
In this paper, the conditions for observability have been derived for targets and observers with different higher-order dynamics, assuming a constant course. These conditions can be extended to scenarios where the target or observer maneuvers (i.e., the course is not constant). Additionally, the corresponding observability conditions can be formulated for cases where the number of measurements is constrained.
\vspace{-8pt}
\appendix
\section{Proof of \eqref{eqn. 3} }
\label{observability proof}

The differential equation is:
\begin{align} \label{eqn. 42}
    \frac{d\mathbf{x}(t)}{dt} - \mathbf{E}(t)\mathbf{x}(t) = \mathbf{F}(t)\mathbf{u}(t).
\end{align}

Let, $\tilde{\boldsymbol{\Phi}}(t, t_{i})^{-1}$ be the integrating factor, where $\tilde{\boldsymbol{\Phi}}(t, t_{i})$ is defined as,
$\tilde{\boldsymbol{\Phi}}(t, t_{i}) = \exp\left( \int_{t_{i}}^t \mathbf{E}(\tau) \, d\tau \right),$
and \(\tilde{\boldsymbol{\Phi}}(t, t_{i})\) is the \textbf{"state transition matrix"}, satisfying:
\[
\frac{\partial}{\partial t}\tilde{\boldsymbol{\Phi}}(t, t_{i}) = \mathbf{E}(t)\tilde{\boldsymbol{\Phi}}(t, t_{i}), \quad \quad \tilde{\mathbf{\Phi}}(t_{i},t_{i}) = \mathbf{I}_{nM \times nM}.
\]
Multiply both sides of \eqref{eqn. 42} by the integrating factor \(\tilde{\boldsymbol{\Phi}}(t, t_{i})^{-1}\), we can obtain:
\[
\tilde{\boldsymbol{\Phi}}(t, t_{i})^{-1} \frac{d\mathbf{x}(t)}{dt} - \tilde{\boldsymbol{\Phi}}(t, t_{i})^{-1} \mathbf{E}(t)\mathbf{x}(t) = \tilde{\boldsymbol{\Phi}}(t, t_{i})^{-1} \mathbf{F}(t)\mathbf{u}(t).
\]
\[
\Rightarrow \frac{d}{dt} \left( \tilde{\boldsymbol{\Phi}}(t, t_{i})^{-1}\mathbf{x}(t) \right) = \tilde{\boldsymbol{\Phi}}(t, t_{i})^{-1} \mathbf{F}(t)\mathbf{u}(t).
\]
Integrate both sides from \(t_{i}\) to \(t\) and rearrange:

\[
\mathbf{x}(t) = \tilde{\boldsymbol{\Phi}}(t, t_{i})\mathbf{x}(t_{i}) + \int_{t_{i}}^t \tilde{\boldsymbol{\Phi}}(t, \tau)\mathbf{F}(\tau)\mathbf{u}(\tau) \, d\tau.
\]

\vspace{-6pt}
\section{Proof of Nec and Suff condition for Doppler ambiguity} 
\label{necc_suff_dopp}
The proof proceeds by assuming that the necessary condition holds and then establishing that the condition for equal Doppler measurements provides the sufficient condition.
We are given the necessary condition in \eqref{eqn 39} that the position vectors 
\(\tilde{\mathbf{s}}_i(t)\) and \(\tilde{\mathbf{s}}_j(t)\) of targets \(i\) and \(j\) must satisfy the equation: $\tilde{\mathbf{s}}_i(t) - \tilde{\mathbf{s}}_j(t) = (\mathbf{W}(t) - \mathbf{I})s_j(t)$,for all $t \in [t_{i}, t_{f}]$.

Differentiating both sides with respect to time \(t\), we get, for all $t \in [t_{i}, t_{f}]$:

\begin{align}\label{eqn 48}
    \dot{\mathbf{s}}_i(t)  = \dot{\mathbf{s}}_j(t) + \dot{\mathbf{W}}(t)\mathbf{s}_j(t) + (\mathbf{W}(t) - I)\dot{\mathbf{s}}_j(t)
\end{align}

Now, \eqref{eqn 46} can be rewritten as:
\begin{align} \label{eqn 49}
   f_{i,0} \frac{\dot{\mathbf{s}}_i^\top(t)\mathbf{s}_i(t)}{s_i(t)} - f_{j,0} \frac{\dot{\mathbf{s}}_j^\top(t)\mathbf{s}_j(t)}{s_j(t)} = c \left(f_{i,0} - f_{j,0}\right) 
\end{align}
Now, let’s substitute \eqref{eqn 48} in \eqref{eqn 49} and simplifying yields;

\begin{align*}
    f_{i,0}\frac{\dot{\mathbf{s}}_j^\top(t)\mathbf{s}_i(t) + \mathbf{s}_j^\top(t)\dot{\mathbf{W}}^\top(t)\mathbf{s}_i(t) + \dot{\mathbf{s}}_j^\top(t)(\mathbf{W}(t) - \mathbf{I})^\top\mathbf{s}_i(t)}{s_i(t)}\\=
    f_{j,0} \frac{\dot{\mathbf{s}}_j^\top(t)\mathbf{s}_j(t)}{s_j(t)}.
\end{align*}
Thus, the sufficient condition entails:
$f_{i,0} = f_{j,0} , \mathbf{W}(t) = \mathbf{I} $ and $\mathbf{s}_i(t) = {\mathbf{r}}_j(t)$ for all $t \in [t_{i}, t_{f}]$.

\vspace{-8pt}
\section{Proof of Nec and Suff condition for Bearing and Doppler ambiguity} 
\label{necc_suff_bearing_doppler}
The necessary condition for Ambiguous trajectory derived from Doppler frequency measurement is: $\mathbf{\tilde{s}}_i(t) - \mathbf{\tilde{s}}_j(t) = (\mathbf{W}(t)-\mathbf{I}) \mathbf{s}_j(t)$.

Where, $\mathbf{\tilde{s}}_i(t)$, $\mathbf{\tilde{s}}_j(t)$, $\mathbf{s}_j(t)$ , $\mathbf{s}_{OB}(t)$, $\mathbf{W}(t)$, $\mathbf{I}$ are as explained previously.
The NECNDSUF condition for ambiguous trajectory derived from only bearing measurement is: $\tilde{\mathbf{s}}_i(t) - \tilde{\mathbf{s}}_j(t) = (\alpha'(t) - 1) \mathbf{s}_j(t)$.
Where  $\tilde{\mathbf{s}}_i(t)$, $\tilde{\mathbf{s}}_j(t) $, $\mathbf{s}_j(t)$, $\alpha'(t)$ are as explained previously.
Now from these two conditions, by equating the right-hand sides of the above equations, we obtain the necessary condition as: $(\mathbf{W}(t) - I) \mathbf{s}_j(t) = (\alpha'(t) - 1) \mathbf{s}_j(t)$
which simplifies to:
\begin{align}\label{eqn 50}
    \mathbf{W}(t) \mathbf{s}_j(t) = \alpha'(t) \mathbf{s}_j(t)   
\end{align}
This implies that  $\mathbf{s}_j(t)$ is an eigenvector of the matrix $\mathbf{W}(t)$ corresponding to the eigenvalue $ \alpha'(t)$. From Appendix \ref{necc_suff_dopp} and \eqref{eqn 50}, the NECNDSUF condition for Doppler bearing tracking is $\alpha'(t) = 1$

\section*{Conflict of Interest}
The authors declare that they have no conflict of interest.

\bibliographystyle{unsrt}
\bibliography{Reference}
\end{document}